\theoremstyle{plain}
\newtheorem{theorem}{Theorem}
\newtheorem{lemma}[theorem]{Lemma}
\theoremstyle{definition}
\newtheorem{remark}[theorem]{Remark}
\numberwithin{theorem}{chapter}
\pgfplotsset{compat=1.18}
\definecolor{mycolor1}{rgb}{0.00000,0.44700,0.74100}%
\definecolor{mycolor2}{rgb}{0.85000,0.32500,0.09800}%
\definecolor{mycolor3}{rgb}{0.92900,0.69400,0.12500}%
\newtheorem{exercise}[theorem]{Exercise}
\newcommand{\dd}{\mathop{}\!\mathrm{d}}
\newcommand{\F}{\mathcal{F}}
\newcommand{\ee}{\mathrm{e}}
\renewcommand{\Re}{\mathop{}\!\mathrm{Re}}
\begin{document}
%
%
%
\title{Size-Structured Population Dynamics\thanks{This chapter is intended to become part of a volume edited by D. Breda, R. Vermiglio and J. Wu in the Springer Book Series CISM International Centre for Mechanical Sciences, following the school ``Delays and Structures in Dynamical Systems: Modeling, Analysis and Numerical Methods'', November 20--24, 2023.}}
%
%
\author{%
    Odo Diekmann\textsuperscript{a}
    and 
    Francesca Scarabel\textsuperscript{b,c}
    \\ \smallskip\small
    \textsuperscript{a} 
    Department of Mathematics, Utrecht University, Utrecht, The Netherlands                
    \\
    \textsuperscript{b} 
    School of Mathematics, University of Leeds, Leeds, United Kingdom
    \\
    \textsuperscript{c}
    Computational Dynamics Laboratory,
    Department of Mathematics, Computer Science and Physics,
    University of Udine, Italy
    }
    \maketitle
%
%
%
%
    \begin{abstract}
This chapter focuses on variable maturation delay or, more precisely, on the mathematical description of a size-structured population consuming an unstructured resource. When the resource concentration is a known function of time, we can describe the growth and survival of individuals quasi-explicitly, i.e., in terms of solutions of ordinary differential equations (ODE). Reproduction is captured by a (non-autonomous) renewal equation, which can be solved by generation expansion. After these preparatory steps, a contraction mapping argument is needed to construct the solution of the coupled consumer--resource system with prescribed initial conditions. As we shall show, this interpretation-guided constructive approach does in fact yield weak solutions of a familiar partial differential equation (PDE).
A striking difficulty with the PDE approach is that the solution operators are, in general, not differentiable, precluding a linearized stability analysis of steady states. This is a manifestation of the state-dependent delay difficulty. As a (not entirely satisfactory and rather technical) way out, we present a delay equation description in terms of the history of both the $p$-level birth rate of the consumer population and the resource concentration. 
We end by using pseudospectral approximation to derive a system of ODE and demonstrating its use in a numerical bifurcation analysis. Importantly, the state-dependent delay difficulty dissolves in this approximation.
\end{abstract}

\begin{figure}[t]
  \hfill
  \begin{minipage}{0.4\textwidth}
    \includegraphics[height=.25\textheight]{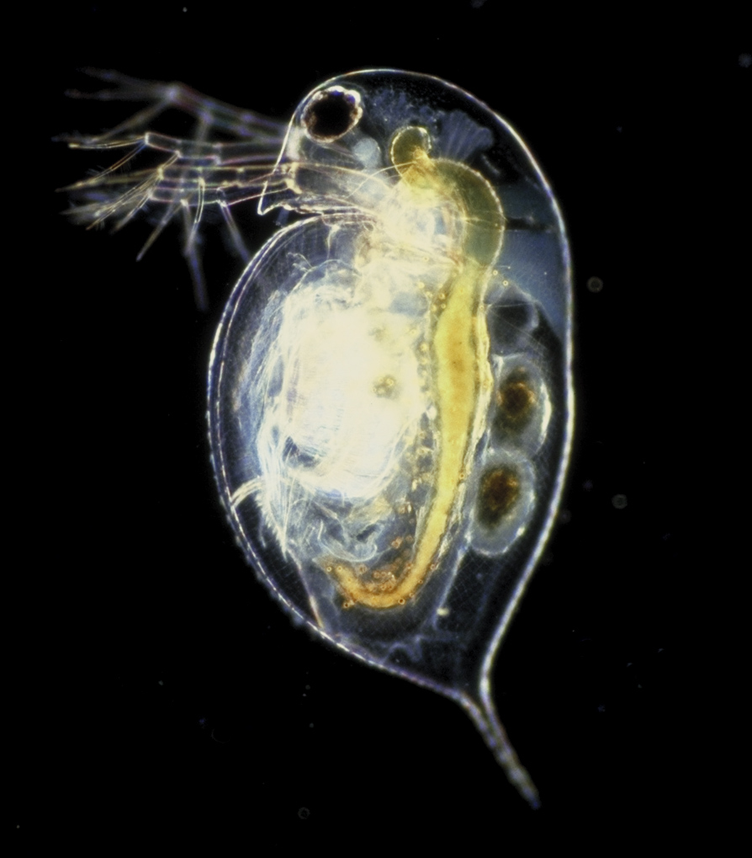}
  \end{minipage}%
  \begin{minipage}{0.5\textwidth}
    \caption{\emph{Daphnia pulex}. \\
    Source: \citet{Daphnia-source}. \\
    Author: Paul Hebert. \\
    Available under the Creative Commons Attribution License. 
    }
  \end{minipage}\hfill
\end{figure}

\CCLsection{Introduction}

Often aquatic toxicity tests are performed with \emph{daphnids} (i.e., water fleas) as the test animal. In such tests, daphnids are cultured while having plenty of food (they filter feed on algae) and while being exposed to a specific concentration of a chemical substance. The question `is it possible to predict the effect of the chemical  substance on populations in the wild, where food is sometimes rather scarce, on the basis of the outcome of the lab tests?' ultimately led to the comprehensive Dynamic Energy Budget (DEB) theory for metabolic organization, see~\cite{Kooijman2010book}.

In the wake of this development, size-structured population models were formulated and analyzed, see for instance \cite{Diekmann2010, Diekmann2017erratum, MetzDiekmann1986}, and the references given there. 

Why size? A peculiar property of daphnids is that they are born with a fixed size $x_b$ and that they mature (i.e., become able to reproduce) upon reaching another fixed size $x_A$ (when assuming that the sizes $x_b$ and $x_A$ are the same for all individuals, we simplify a bit, but not that much; the benefit is that we keep the model parameter scarce). The growth rate $g$ (i.e., the rate at which the size of an individual increases) depends on the size $x$ and the food concentration $S$. If $S$ is constant in time, there is a fixed relationship between the size and the age of an individual and we are in the situation described by \citet{DiekmannScarabelAge}.
In particular, there is a fixed maturation delay corresponding to the time needed to grow from size $x_b$ to size $x_A$.

In general, however, the maturation delay will vary in time, for the simple reason that the food concentration does. Fluctuations of the food concentration may be due to seasonal effects, but may also be brought about by the effects of consumption. Taking a daphnid-centered point of view, we call the food concentration the \emph{environmental variable} and we classify consumption of food as \emph{feedback to the environmental variable}. 

It turns out that size structure and density dependence by way of variable maturation delay are of particular relevance for the ecology of many fish species. The book by \cite{DeRoosPersson2013} provides motivating observations, lots of examples and an inspiring survey of a large body of literature.

The above observations are our motivation to consider in this chapter models for the interaction of a size-structured consumer with an unstructured resource. We do assume that all consumer individuals are born with the same size, indicated by $x_b$, and that the growth rate $g$ is a strictly positive function of size $x$ and food concentration $S$ (so individuals cannot shrink). 
We start by considering $S$ as a given function of time. Our aim is to:
\begin{itemize}
    \item introduce various ways of formulating size-structured models;
    \item show how these formulations relate to each other;
    \item outline relevant mathematical theory, also for an analysis of the feedback loop that arises when the dynamics of $S$ is partly determined by the `grazing' of the consumer population (we shall provide pointers to the literature for precisely stated results and proofs);
    \item discuss steady states and their stability, with due attention for the (at first surprising) fact that linearized stability is a subtle issue for size-structured models (once one `sees' the connection with state-dependent delay, the fact is less surprising for specialists in the theory of \emph{delay differential equations} (DDE));
    \item describe how pseudospectral approximation facilitates numerical bifurcation analysis.
\end{itemize}
Our presentation shall often be sketchy, but we provide references to the literature for more precise formulations and technical details.  

\bigskip
The chapter is organized as follows. In the next section, we introduce notational preliminaries for the $i$-level and $p$-level dynamics. Here and throughout the chapter the prefix $i$ refers to `individual', while $p$ refers to `population'. 
In Section~\ref{sec:interpretation} we assume that the food concentration is prescribed, and use the interpretation to express the population size-density $m(t,\cdot)$ in terms of the population birth rate and the initial density. 
Our interpretation-driven construction is equivalent to the more traditional \emph{partial differential equation} (PDE) approach. This is rigorously shown in Section~\ref{sec:PDE}.
In Section~\ref{s:RE} we focus on the $p$-level birth rate~$b$, and derive a \emph{renewal equation} (RE) that fully determines the population dynamics if the history of~$b$ for $t\leq 0$ is assumed to be known. 
In Section~\ref{sec:feedback} we relax the assumption that the food concentration~$S$ is given, and derive a differential equation for~$S$ that accounts for the population feedback through consumption. 

Sections~\ref{sec:steadystates} and \ref{sec:dynamicalsystem} adopt a dynamical system perspective. First, we show how steady states of the infinite-dimensional system are still determined by one single equation in one unknown for the food concentration, stating that the basic reproduction number should be one. Then, we discuss the difficulties inherent with the study of stability of steady states for the dynamical system defined by the PDE, stemming from the smoothness requirement on the initial density~$m_0$. We summarize the delay equation approach as a way around this technical challenge. 

Section~\ref{sec:pseudospectral} summarizes the pseudospectral approximation as a user-friendly method to perform numerical bifurcation analyses of size-structured models formulated as PDE. The infinite-dimensional system is reduced to a finite-dimensional one, which is amenable to be studied with widely available software for ordinary differential equations (ODE). 

The chapter ends with an Appendix on infinite-delay equations and the pseudospectral approximation for their numerical bifurcation analysis. We highlight the challenges introduced by the infinite delay as well as the limitations of the method for practical applications.

\CCLsection{Conceptual and Notational Preliminaries}
\label{sec:preliminaries}

\CCLsubsection*{Part One: $i$-Level}
\noindent Let $X_S(t,s,\xi)$ denote the size of an individual at time $t$, given that its size equals $\xi$ at time $s$ and given the food concentration $S$ on the interval with end points $s$ and $t$. We assume that $X_S(\cdot,s,\xi)$ is equal to $x$ as a function of $t$, where $x(t)$ is determined by solving the ODE
\begin{equation} \label{x-ODE}
\frac{\dd x}{\dd \tau} = g(x,S(\tau))
\end{equation}
with initial condition
\begin{equation} \label{x-IC}
x(s) = \xi.
\end{equation}

We assume that $g$ takes positive values for the relevant values of $S$ and $x$ (admittedly this statement is a bit vague, but making it more precise would divert the attention from what matters most) and that the size at birth, denoted by $x_b$, is positive. This entails that $x_b$ is the minimal size that can occur and that, more generally, $t \mapsto X_S(t,s,\xi)$ is monotone increasing.

Let $T_S(x,s,\xi)$ denote the time at which the size of an individual equals~$x$, given that its size equals~$\xi$ at time~$s$, and given~$S$ on the interval with end points~$s$ and~$t$. In other words, $x \mapsto T_S(x,s,\xi)$ is the inverse function of $t \mapsto X_S(t,s,\xi)$.

Let $\F_S(t,s,\xi)$ be the probability that an individual having size~$\xi$ at time~$s$ is still alive at time~$t$ (so here, in contrast with before, we require that $t \geq s$). Whenever the model specification involves a death rate $\mu(x,S)$, this survival probability is `explicitly' given by
\begin{equation} \label{F-mu}
    \F_S(t,s,\xi) = \exp \left\{ - \int_s^t \mu(X_S(\tau,s,\xi), S(\tau)) \dd \tau \right\}, \quad t\geq s.
\end{equation}

\CCLsubsection*{Part Two: $p$-Level}
\noindent Let $M(t,x)$ denote the number of individuals with size less than or equal to~$x$ at time~$t$.
Let $m(t,x)$ denote the partial derivative of~$M$ with respect to~$x$. So $m(t,\cdot)$ is the population size-density, i.e., the number of individuals per unit of size at size~$x$ and time~$t$.
Let $b(t)$ denote the $p$-level birth rate at time~$t$. So $b(t)$ is the rate ($=$ number$/$time) at which newborn individuals enter the population at time~$t$. All of these newborns have size~$x_b$.

\CCLsection{Using the Interpretation to Formulate Bookkeeping Consistency Relations}
\label{sec:interpretation}

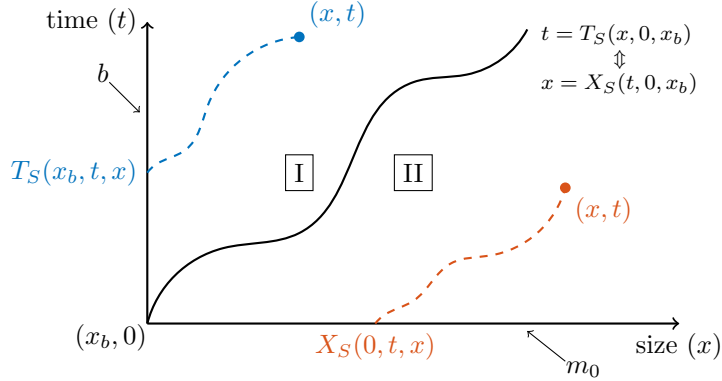
\begin{figure}[t]
    \centering

 
    \begin{tikzpicture}
    \draw [->,thick] (0,0) -- (7,0) node [anchor=north] {size ($x$)};
    \draw [->,thick] (0,0) -- (0,4) node [anchor=east] {time ($t$)};

    \draw [thick,dashed,color=mycolor1] (0,2) node [anchor=east] {$T_S(x_b,t,x)$} to [curve through ={(0.1,2.1)  . . (0.5,2.3) . . (0.8,2.9) }] (2,3.8);
    \fill [color=mycolor1] (2,3.8) circle (2pt) node [anchor=south west,color=mycolor1] {$(x,t)$}; 

    \draw [thick] (0,0) node {} to [ curve through ={(1,1)  . . (2,1.2) . . (3.5,3.2) . . (4.2,3.3)}] (5,3.9);
    \draw (5,3.5) node [anchor=south west, inner sep=5pt] { \footnotesize
        $t = T_S(x,0,x_b)$
    }; 
    \draw (5,3.5) node [rotate=90,yshift=-35pt] {\scriptsize $\Leftrightarrow$ };
    \draw (5,3.5) node [anchor=north west, inner sep=5pt] { \footnotesize
        $x= X_S(t,0,x_b)$
    }; 

    \draw [thick,dashed,color=mycolor2] (3,0) node [anchor=north] {$X_S(0,t,x)$} to [curve through ={(3.1,0.1)  . . (3.5,0.3) . . (4,0.8) . . (4.5,0.9)  }] (5.5,1.8);
    \fill [color=mycolor2] (5.5,1.8) circle (2pt) node [anchor=north west,color=mycolor2] {$(x,t)$}; 

    \draw [->] (-0.5,3.2) node [anchor= south east,inner sep=0pt] {$b$} -- (-0.1,2.8) ;
    \draw [->] (5.5,-0.5) node [anchor=north west,inner sep=0pt] {$m_0$} -- (5,-0.1) ;
    
    \node [anchor=north east,inner sep=0pt] {$(x_b,0)$};

    \draw (2,2) node [draw] {I};
    \draw (3.5,2) node [draw] {II};

    \end{tikzpicture}
    \caption{Size-time quarter plane $(x,t)$ with $t\geq 0$ and $x \geq x_b$, divided into two parts, I and II, by the curve $t=T_S(x,0,x_b)$, or, equivalently, $x=X_S(t,0,x_b)$. 
    The birth rate $b$ is given on $x=x_b$ and the initial size-density $m_0$ is given on $t=0$. 
    }
    \label{fig:size}
\end{figure}

Throughout this section we consider~$S$ as a known function of time. In order to prepare for a mathematical analysis of the initial value problem, we also consider the size-density $m(t,x)$ as a known function of~$x$ for~$t$ equal to the chosen initial time. Without loss of generality we choose $t=0$ as the initial time. So we require that
\begin{equation} \label{m0}
    m(0,x) = m_0(x), \quad \text{for } x \geq x_b,
\end{equation}
with $m_0$ a given/known integrable non-negative function. 
     
For the time being, we pretend that we know the population level birth rate~$b$ for $t \geq 0$. With reference to Figure~\ref{fig:size}, we can now say that we know both the initial distribution along the boundary $t=0$ as well the input along the boundary $x=x_b$. The task is to determine $m(t,x)$ in the interior of the quarter plane from these data. In order to perform this task, we first express $M(t,x)$ explicitly in terms of the data, i.e., $b$ and $m_0$, and the functions~$X_S$, $T_S$ and~$\F_S$. This may seem a detour, but the point is that the formula for~$M$ can be deduced straightforwardly from the interpretation, as we show now.
     
Let $x$ and $t$ be given. If $t > T_S(x,0,x_b)$, the individuals with size less than or equal to $x$ were born after time zero. In fact after the time $T_S(x_b,t,x) > 0$ at which the individuals having size $x$ at time $t$ were born. Taking the possibility of death into account, we arrive at the conclusion that
\begin{equation} \label{MpartI}
    M(t,x) = \int_{T_S(x_b,t,x)}^t \F_S(t,\tau,x_b) b(\tau) \dd\tau, \quad \text{if } t > T_S(x,0,x_b).
\end{equation}

Next assume that $t < T_S(x,0,x_b)$. Then all the individuals born at some time $\tau \in [0,t]$ contribute to $M(t,x)$, provided they survived in between $\tau$ and $t$. The other contribution to $M(t,x)$ comes from individuals that were already present at time zero and did not die before time $t$. An individual with size $x$ at time $t$ had size $X_S(0,t,x)$ at time zero. And in order to be smaller than $x$ at time $t$, an individual should have been smaller than $X_S(0,t,x)$ at time zero. Taking once again the possibility of death into account, we obtain
\begin{multline} \label{MpartII}
    M(t,x) = \int_0^t \F_S(t,\tau,x_b) b(\tau) \dd\tau + \int_{x_b}^{X_S(0,t,x)} \F_S(t,0,\xi) m_0(\xi) \dd\xi, \\ \text{if } t < T_S(x,0,x_b).
\end{multline}

In order to express $m(t,x)$ explicitly in terms of $b$, $m_0$, $X_S$, $T_S$ and $\F_S$, we only need to differentiate with respect to $x$. Thus we obtain
\begin{equation} \label{mtx}
m(t,x) = 
{\footnotesize
\begin{cases}
    \F_S(t,0,X_S(0,t,x)) \, m_0(X_S(0,t,x)) \, D_3X_S(0,t,x), & t < T_S(x,0,x_b), \\[8pt]
    \F_S(t,T_S(x_b,t,x),x_b) \, b(T_S(x_b,t,x)) \, (- D_3T_S(x_b,t,x)), & t > T_S(x,0,x_b).
\end{cases}}
\end{equation}

Here the derivatives of, respectively, $X_S$ and $T_S$ reflect that $m$, $m_0$ and $b$ are not numbers, but numbers per unit of size ($m$ and $m_0$) or per unit of time~($b$). (We invite readers who are familiar with thinking in terms of physical dimensions to check that both expressions at the right-hand side of~\eqref{mtx} have the same dimension as~$m$.) 
Apart from these factors, the two identities express that the individuals having size~$x$ at time~$t$ are the surviving fraction of the individuals that, respectively, 
\begin{enumerate}[label=(\roman*)]
    \item had size $X_S(0,t,x)$ at time zero;
    \item were born at time $T_S(x_b,t,x)$.
\end{enumerate} 
The factors involving the derivatives take care of the transformation of intervals (intervals on the size axis at time zero are mapped to intervals on the size axis at time $t$; intervals on the time axis at $x=x_b$ are mapped to intervals on the size axis at time~$t$).

We conclude that:
\begin{itemize}
    \item measures (here represented by the normalized bounded variation function $M(t,\cdot)$) are in some respects easier to work with than densities;
    \item the biological interpretation yields an explicit formula for $m$ in terms of $m_0$, $b$, $T_S$, $X_S$, and $\F_S$ (but keep in mind that, so far, we eliminated density dependence by considering $S$ as a given function of time). 
\end{itemize}

\CCLsection{How Does This Relate to the PDE Approach?}
\label{sec:PDE}

The traditional way to derive \eqref{mtx} is to solve (by way of integration along characteristics) the first order PDE
\begin{equation} \label{m-PDE}
    \frac{\partial m}{\partial t} + \frac{\partial (gm)}{\partial t} = - \mu m,
\end{equation}
with initial condition \eqref{m0}, here repeated as
\begin{equation} \label{m-IC}
m(0,x) = m_0(x),
\end{equation}
and boundary condition  
\begin{equation} \label{m-BC}
    g m |_{x=x_b} = b. 
\end{equation}

Rather than demonstrating how to derive \eqref{mtx} in this manner, we now show that \eqref{mtx} does indeed  specify a solution, in an appropriate weak sense, of \eqref{m-PDE}--\eqref{m-BC}. We do this as a service to those readers who prefer mathematical manipulation over reasoning in terms of individuals. But we like to emphasize that there is, in fact, no need to formulate the PDE \eqref{m-PDE} and to specify in which sense \eqref{mtx} is its unique solution satisfying \eqref{m-IC} and \eqref{m-BC}, for the simple reason that it is easy to understand \eqref{MpartI} and \eqref{MpartII} and to verify that \eqref{mtx} follows by differentiation. 

A first aspect of the `weak sense' is that we focus on the equation 
\begin{equation} \label{M-PDE}
    \frac{\partial}{\partial t} M(t,x) + g(x,S(t)) \frac{\partial}{\partial x} M(t,x) = b(t) - \int_{x_b}^x \mu(\xi,S(t)) M(t,\dd\xi) 
\end{equation}
obtained by integrating \eqref{m-PDE} with respect to size from $x_b$ to $x$ (we rewrote~$m$ as $\partial M/\partial x$ and wrote the last term as a Stieltjes integral with respect to $M(t,\cdot)$ in order to eliminate the symbol $m$ from the equation and make it self-contained). 
The goal of this section is to prove the following result. 

\begin{theorem}\label{th:4.1}
    The function $M$ defined by \eqref{MpartI} and \eqref{MpartII} satisfies \eqref{M-PDE}. 
\end{theorem}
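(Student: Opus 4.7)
The plan is to verify~\eqref{M-PDE} by direct differentiation of~\eqref{MpartI} and~\eqref{MpartII} separately in each of the two regions of Figure~\ref{fig:size}. The essential tool is a pair of annihilation identities for the transport operator $\mathcal{D} := \partial_t + g(x,S(t))\, \partial_x$: differentiating $X_S(t, T_S(x_b,t,x), x_b) = x$ in $t$ and in $x$ and eliminating $\partial_s X_S$ between the results gives $\mathcal{D}\, T_S(x_b,t,x) = 0$, and analogously from $X_S(t, 0, X_S(0,t,x)) = x$ one obtains $\mathcal{D}\, X_S(0,t,x) = 0$. Both are instances of the geometric fact that any quantity encoding the backward characteristic of the ODE~\eqref{x-ODE} is constant along forward characteristics. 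I shall also use $\partial_t \F_S(t,\tau,\xi) = -\mu(X_S(t,\tau,\xi), S(t))\, \F_S(t,\tau,\xi)$, which follows directly from~\eqref{F-mu}.

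For Region~I, applying Leibniz's rule to~\eqref{MpartI} and using $\F_S(t,t,x_b) = 1$, the upper-limit term gives $b(t)$ while the lower-limit term is proportional to $\mathcal{D}\, T_S(x_b,t,x) = 0$ and therefore drops out. Only the $t$-derivative of $\F_S$ inside the integral remains, so that
\[
\mathcal{D} M(t,x) \;=\; b(t) - \int_{T_S(x_b,t,x)}^t \mu(X_S(t,\tau,x_b), S(t))\, \F_S(t,\tau,x_b)\, b(\tau)\, \dd\tau.
\]
The substitution $\tau = T_S(x_b,t,\xi)$, whose Jacobian $-\partial_\xi T_S(x_b,t,\xi)$ is precisely the factor appearing in the Region~I branch of~\eqref{mtx}, rewrites this integral as $\int_{x_b}^x \mu(\xi,S(t))\, m(t,\xi)\, \dd\xi = \int_{x_b}^x \mu(\xi,S(t))\, M(t,\dd\xi)$, which is the right-hand side of~\eqref{M-PDE}.

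For Region~II the same strategy applies to~\eqref{MpartII}. The first integral is independent of $x$, and its $t$-derivative contributes $b(t) - \int_0^t \mu(X_S(t,\tau,x_b),S(t))\,\F_S(t,\tau,x_b)\, b(\tau)\,\dd\tau$. In the second integral, the boundary term at the upper limit $X_S(0,t,x)$ is proportional to $\mathcal{D}\, X_S(0,t,x) = 0$ and drops out, leaving $-\int_{x_b}^{X_S(0,t,x)} \mu(X_S(t,0,\xi),S(t))\,\F_S(t,0,\xi)\, m_0(\xi)\,\dd\xi$. Splitting $[x_b,x]$ at the transition size $X_S(t,0,x_b)$ and applying the changes of variables $\tau = T_S(x_b,t,\xi)$ on the lower sub-interval and $\eta = X_S(0,t,\xi)$ on the upper one, these two pieces reassemble into $-\int_{x_b}^x \mu(\xi,S(t))\, m(t,\xi)\,\dd\xi$ via the two branches of~\eqref{mtx}, matching~\eqref{M-PDE}.

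The main obstacle is not conceptual but bookkeeping: one must track signs so that the Jacobians produced by the changes of variables match exactly the factors $-\partial_\xi T_S$ and $\partial_\xi X_S$ appearing in~\eqref{mtx}. A final remark to make the \emph{weak sense} alluded to before~\eqref{M-PDE} unambiguous: the two expressions~\eqref{MpartI} and~\eqref{MpartII} agree along the dividing characteristic $t = T_S(x,0,x_b)$, since the second integral in~\eqref{MpartII} has coincident upper and lower limits $x_b$ there, while the first becomes $\int_0^t \F_S(t,\tau,x_b) b(\tau)\,\dd\tau$ in agreement with~\eqref{MpartI} evaluated at $T_S(x_b,t,x)=0$. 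Hence $M$ is continuous across the characteristic and no surface distribution appears in~\eqref{M-PDE}.
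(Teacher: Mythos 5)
Your proposal is correct and follows essentially the same route as the paper: the same annihilation identities $\left(\partial_t + g\,\partial_x\right)T_S(x_b,t,x)=0$ and $\left(\partial_t + g\,\partial_x\right)X_S(0,t,x)=0$ (the content of Lemma~\ref{l:4.1}, which you derive by a minor variant of the same computation), the same cancellation of the moving-boundary terms with $b(t)$ arising from the upper limit $t$, and the same changes of variables $\tau = T_S(x_b,t,\xi)$ and $\eta = X_S(0,t,\xi)$ to identify the mortality terms with the Stieltjes integral in \eqref{M-PDE}. The only differences are presentational (the paper first treats $\mu\equiv 0$ and then adds mortality, while you also note explicitly the continuity of $M$ across the dividing characteristic), so there is nothing substantive to add.
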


We first prove a lemma. 
\begin{lemma}\label{l:4.1}
The following hold:
\smallskip
\begin{enumerate}[label=(\roman*)]\addtolength{\itemsep}{0.5\baselineskip}
    \item\label{i} $\left( \frac{\partial}{\partial t}  +  g \frac{\partial}{\partial x}\right) X_S(0,t,x) = 0$;
    \item\label{ii} $\left( \frac{\partial}{\partial t}  +  g \frac{\partial}{\partial x}\right) T_S(x_b,t,x) = 0$.
\end{enumerate}
\end{lemma}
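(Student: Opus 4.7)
The plan is to exploit the interpretation: both $X_S(0,t,x)$ and $T_S(x_b,t,x)$ are quantities that, by construction, stay constant along the characteristic curves of the transport operator $\frac{\partial}{\partial t} + g(x,S(t))\frac{\partial}{\partial x}$. These characteristics are precisely the trajectories $\tau \mapsto X_S(\tau,s,\xi)$ of the ODE \eqref{x-ODE}, along which $\dd x/\dd \tau = g(x,S(\tau))$.

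For \ref{i}, I would fix $\xi \geq x_b$ and follow the characteristic starting at $(0,\xi)$, namely $\tau \mapsto (\tau, X_S(\tau,0,\xi))$. Since $x \mapsto T_S(x,0,\xi)$ is by definition the inverse of $\tau \mapsto X_S(\tau,0,\xi)$, the flow identity
\[
    X_S\bigl(0,\tau, X_S(\tau,0,\xi)\bigr) = \xi
\]
holds for every $\tau \geq 0$. Differentiating both sides with respect to $\tau$, applying the chain rule, and substituting \eqref{x-ODE} yields
\[
    \frac{\partial}{\partial t} X_S(0,t,x) + g(x,S(t))\,\frac{\partial}{\partial x} X_S(0,t,x) = 0
\]
evaluated at $(t,x) = (\tau, X_S(\tau,0,\xi))$. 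Since strict positivity of $g$ guarantees that $(\tau,\xi) \mapsto (\tau, X_S(\tau,0,\xi))$ sweeps out the relevant region of the quarter plane bijectively, \ref{i} follows at every point.

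For \ref{ii} the argument is structurally identical, but with a different family of characteristics: fix a potential birth time $s$ and consider $\tau \mapsto (\tau, X_S(\tau,s,x_b))$. By the very definition of $T_S$ as the inverse of $t \mapsto X_S(t,s,x_b)$, one has
\[
    T_S\bigl(x_b,\tau, X_S(\tau,s,x_b)\bigr) = s \quad \text{for all } \tau,
\]
which is constant in $\tau$. Differentiating and invoking \eqref{x-ODE} exactly as in \ref{i} produces \ref{ii}.

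The main obstacle — rather mild — is bookkeeping of the three arguments of $X_S$ and $T_S$: in the lemma the partial derivatives $\partial/\partial t$ and $\partial/\partial x$ act on the \emph{second} and \emph{third} arguments of $X_S(0,t,x)$ (respectively $T_S(x_b,t,x)$), whereas the ODE \eqref{x-ODE} governs variation of the \emph{first} argument of $X_S$. The chain rule, applied to a composition that is constant along a characteristic, disentangles this automatically; the positivity of $g$ underwrites both the smoothness of the inverse $T_S$ and the surjectivity needed to conclude that the identities hold pointwise on the whole domain.
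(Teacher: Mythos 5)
Your proof is correct, and while it draws on the same basic ingredients as the paper's proof (the flow property of $X_S$, the chain rule, and the ODE \eqref{x-ODE}), the execution is genuinely more direct. The paper differentiates the cocycle identity $X_S(t,r,X_S(r,s,x)) = X_S(t,s,x)$ with respect to the initial time $s$ and must then evaluate on the diagonal, using the auxiliary identities $D_1X_S(r,r,x)+D_2X_S(r,r,x)=0$ and $D_1X_S(r,r,x)=g(x,S(r))$; for \ref{ii} it similarly differentiates $T_S(x,T_S(z,t,y),z)=T_S(x,t,y)$ and needs the separate computation $D_1T_S(y,t,y)=1/g(y,S(t))$. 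You instead differentiate, with respect to the parameter along a characteristic, the statements that $X_S(0,\tau,X_S(\tau,0,\xi))=\xi$ and $T_S(x_b,\tau,X_S(\tau,s,x_b))=s$ are constant in $\tau$; the ODE then supplies $D_1X_S$ directly at a general point of the trajectory, so no diagonal identities and no derivative of $T_S$ with respect to its first argument are required. What your route buys is a one-line computation per item and a transparent link to the interpretation (constancy along characteristics); what it costs is the extra covering step, which you handle correctly: the identity is obtained only at points lying on a characteristic, and positivity of $g$ ensures the characteristics emanating from the $t=0$ axis (respectively from $x=x_b$) sweep out exactly the regions where $X_S(0,t,x)$ (respectively $T_S(x_b,t,x)$) enters \eqref{mtx} and the proof of Theorem~\ref{th:4.1}. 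One cosmetic remark: the identity $X_S(0,\tau,X_S(\tau,0,\xi))=\xi$ is the two-parameter flow (uniqueness-of-solutions) property, not a consequence of $T_S$ being the inverse of $X_S$, so the appeal to $T_S$ at that point of your argument is superfluous.
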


\begin{proof}
The definition of $X_S$ as the solution operator associated with the ODE initial value problem \eqref{x-ODE}--\eqref{x-IC} implies at once that
\begin{equation*}
    X_S(t,r,X_S(r,s,x)) = X_S(t,s,x).
\end{equation*}
Differentiation of this identity with respect to the variable $s$ yields
\begin{equation*}
    D_3X_S(t,r,X_S(r,s,x))\, D_2X_S(r,s,x) = D_2X_S(t,s,x)
\end{equation*}
and, by putting $s=r$,
\begin{equation*}
    D_3X_S(t,r,x) \, D_2X_S(r,r,x) = D_2X_S(t,r,x).
\end{equation*}
By differentiating the identity $X_S(r,r,x) = x$ with respect to $r$ we find that
\begin{equation*}
    D_1X_S(r,r,x) + D_2X_S(r,r,x) = 0,
\end{equation*}
and from \eqref{x-ODE} we deduce that 
\begin{equation*}
    D_1X_S(r,r,x) = g(x,S(r)).
\end{equation*}
Combining the last three identities we obtain
\begin{equation*}
    D_2X_S(t,r,x) + g(x,S(r)) \, D_3X_S(t,r,x) = 0,
\end{equation*}
and if we now put $t=0$ and subsequently replace $r$ by $t$ we obtain \ref{i}.

To prove \ref{ii}, we likewise first observe that
\begin{equation*}
    T_S(x,T_S(z,t,y),z) = T_S(x,t,y),
\end{equation*}
and next differentiate with respect to $z$. This yields
\begin{equation*}
    D_2T_S(x,T_S(z,t,y),z) \, D_1T_S(z,t,y) + D_3T_S(x,T_S(z,t,y),z) = 0
\end{equation*}
and, by putting $z=y$,
\begin{equation}\label{*}
    D_2T_S(x,t,y) \, D_1T_S(y,t,y) + D_3T_S(x,t,y) = 0.
\end{equation}
Differentiating the identity
\begin{equation*}
    X_S(T_S(z,t,y),t,y) = z
\end{equation*}
with respect to $z$, we find
\begin{equation*}
    D_1X_S(T_S(z,t,y),t,y) \, D_1T_S(z,t,y) = 1
\end{equation*}
and by putting $z=y$
\begin{equation*}
    D_1X_S(t,t,y) \, D_1T_S(y,t,y) = 1.
\end{equation*}

By \eqref{x-ODE}, the first factor in this last identity equals $g(y,S(t))$, and consequently
\begin{equation*}
    D_1T_S(y,t,y) = \frac{1}{g(y,S(t))}.
\end{equation*}
So, if we multiply \eqref{*} by $g(y,S(t))$, we obtain \ref{ii} modulo a renaming of variables.  
\end{proof}

\begin{proof}[Proof of Theorem \ref{th:4.1}]
    To highlight the key points, we proceed in two steps. 

First we assume that $\mu$ is identically equal to zero or, equivalently, that~$\F$ is identically equal to one. In that special case, \eqref{MpartI} simplifies to 
\begin{equation} \label{MpartI-special}
    M(t,x) = \int_{T_S(x_b,t,x)}^t b(\tau) \dd\tau,
\end{equation}
and \eqref{MpartII} to
\begin{equation} \label{MpartII-special}
    M(t,x) = \int_0^t b(\tau) \dd\tau + \int_{x_b}^{X_S(0,t,x)} m_0(\xi) \dd\xi,
\end{equation}
and equation \eqref{M-PDE} amounts to
\begin{equation} \label{M-PDE-special}
    \left( \frac{\partial}{\partial t}  +  g \frac{\partial}{\partial x}\right) M = b.
\end{equation}

When $\partial/\partial t$ is applied to~$M$ as defined by \eqref{MpartI-special} or \eqref{MpartII-special}, the occurrence of~$t$ as the upper integration boundary yields the right-hand side~$b$ of~\eqref{M-PDE-special}. So~\eqref{M-PDE-special} does indeed hold if both $T_S(x_b,t,x)$ and $X_S(0,t,x)$ are in the kernel of the operator $(\partial/\partial t + g \partial/\partial x)$,
which is the content of Lemma~\ref{l:4.1}. 

We conclude that, in the special case of no mortality, $M$ defined by \eqref{MpartI}--\eqref{MpartII} does indeed satisfy~\eqref{M-PDE}. 
     
How about the general case? The observations concerning the dependence on~$t$ via the upper integration boundary and the identities of Lemma~\ref{l:4.1} are still very relevant. They imply that all we need to do is to consider the differentiation with respect to~$t$ as the first argument of~$\F$ and to verify that this yields the second term at the right-hand side of~\eqref{M-PDE}.

From the interpretation-inspired formula~\eqref{F-mu} (or, equivalently, from the interpretation of $\mu$ as the per capita death rate as a function of $i$-state, i.e.,~ size, and environmental condition, i.e., food concentration) we deduce that
\begin{equation*}
    \frac{\dd}{\dd t} \F_S(t,s,\xi) = - \mu(X_S(t,s,\xi),S(t)) \, \F(t,s,\xi).
\end{equation*}

So if we differentiate \eqref{MpartI} with respect to time, the relevant term reads
\begin{equation*}
    - \int_{T_S(x_b,t,x)}^t \mu(X_S(t,\tau,x_b),S(t)) \F(t,\tau,x_b) b(\tau) \dd\tau,
\end{equation*}
while the corresponding contribution to \eqref{M-PDE} reads
\begin{equation*}
    - \int_{x_b}^x \mu(\xi,S(t)) \F(t,T_S(x_b,t,\xi),x_b) b(T_S(x_b,t,\xi)) D_3T_S(x_b,t,\xi)\dd\xi.
\end{equation*}
The transformation $\tau=T_S(x_b,t,\xi)$, with inverse $\xi=X_S(t,\tau,x_b)$, shows that these two expressions are equal to each other.

If we differentiate \eqref{MpartII} with respect to time, the relevant term reads
\begin{equation*}
    \int_{x_b}^{X_S(0,t,x)} \mu(X_S(t,0,\xi),S(t)) \F_S(t,0,\xi) m_0(\xi) \dd\xi,
\end{equation*}
while the relevant term at the right-hand side of \eqref{M-PDE} reads
\begin{equation*}
    - \int_{X_S(t,0,x_b)}^x \mu(\eta,S(t)) \F_S(t,0,X_S(0,t,\eta)) m_0(X_S(0,t,\eta) D_3 X_S(0,t,\eta) \dd\eta.
\end{equation*}
The transformation $\xi = X_S(0,t,\eta)$ with inverse $\eta = X_S(t,0,\xi)$ shows that these two expressions are equal to each other.

We conclude that \eqref{MpartI}--\eqref{MpartII} does indeed satisfy the PDE \eqref{M-PDE}. Since $M(t,x_b)$ is identically equal to zero, we see, by taking $x=x_b$ in \eqref{M-PDE}, that \eqref{m-BC} holds when we define $m$ as the derivative of $M$ with respect to $x$, cf.~\eqref{mtx}. Finally, note that \eqref{m-IC} is obtained if we choose $t=0$ in \eqref{mtx}, since $X_S(0,0,x) = x$. Thus we verified that $m$ defined by \eqref{mtx} satisfies \eqref{m-PDE}--\eqref{m-BC}, provided we interpret \eqref{m-PDE} in the loose sense that the primitive $M$ of $m$ satisfies the integrated version \eqref{M-PDE}.

\end{proof}

\begin{remark}\label{r:4.2}
The usual way to solve \eqref{m-PDE} by integration along characteristics is to write $\partial (gm)/\partial x$ as
$g \frac{\partial m}{\partial x} + (D_1 g) m$ and to bring $(D_1 g) m$ to the other side of the  equality, thus effectively adding $D_1g$ to $\mu$. This leads to a variant of \eqref{mtx} with a last factor that is, at first sight, rather different. We invite worried readers to check that the first sight is deceptive. For encouragement we state the following lemma. 
\end{remark}

\begin{lemma}\label{l:4.3}
The following hold:
\smallskip
{\small
\begin{enumerate}[label=(\roman*)]
    \item\label{2i} $D_3X_S(0,t,x) = \exp\left\{- \int_0^t D_1g(X_S(\sigma,t,x),S(\sigma)) \dd\sigma\right\}$,
    \item\label{2ii} $- D_3T_S(x_b,t,x) = \frac{1}{g(x_b,t,x)} \, \exp\left\{- \int_{T_S(x_b,t,x)}^t D_1g(X_S(\sigma,t,x),S(\sigma)) \dd\sigma\right\}$.
\end{enumerate}
}
\smallskip\noindent
Hint: To prove \ref{2i}, differentiate \eqref{x-ODE}--\eqref{x-IC} with respect to $\xi$ and integrate.
To prove \ref{2ii}, differentiate the identity $X_S(T_S(x,s,\xi),s,\xi) = x$ with respect to $x$ and use the variant $D_3X_S(t,s,\xi) = \exp \left\{ \int_s^t D_1g(X_S(\sigma,s,\xi),S(\sigma) \dd\sigma\right\}$ of~\ref{2i}.
\end{lemma}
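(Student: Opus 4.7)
The plan is to establish (i) by the variational-equation trick and then deduce (ii) from (i) by implicit differentiation of the identity that expresses $T_S$ as the inverse of $X_S$ with respect to its time argument.

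For part (i), I would differentiate the ODE initial value problem \eqref{x-ODE}--\eqref{x-IC} with respect to the initial datum $\xi$. Writing $\phi(\tau) := D_3 X_S(\tau, s, \xi)$ and interchanging $\partial_\xi$ with $d/d\tau$ (legitimate by smoothness of $g$), one obtains the linear variational equation
\[
\frac{d\phi}{d\tau} = D_1 g(X_S(\tau, s, \xi), S(\tau))\,\phi(\tau), \qquad \phi(s) = 1,
\]
where the initial condition comes from differentiating $X_S(s,s,\xi) = \xi$ with respect to $\xi$. Since this is a scalar linear ODE, its solution is
\[
D_3 X_S(\tau, s, \xi) = \exp\!\left\{\int_s^\tau D_1 g(X_S(\sigma, s, \xi), S(\sigma))\,\dd\sigma\right\},
\]
which is precisely the variant of (i) stated in the hint. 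Specializing $s = t$, $\tau = 0$, $\xi = x$ and using $\int_t^0 = -\int_0^t$ then yields (i).

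For part (ii), I would differentiate the inverse-function identity
\[
X_S(T_S(x, s, \xi), s, \xi) = x
\]
with respect to $\xi$. Since the right-hand side is independent of $\xi$, this gives
\[
D_1 X_S(T_S(x, s, \xi), s, \xi)\,D_3 T_S(x, s, \xi) + D_3 X_S(T_S(x, s, \xi), s, \xi) = 0.
\]
By \eqref{x-ODE} the first factor equals $g(x, S(T_S(x, s, \xi)))$, so one can solve for $D_3 T_S$ and then substitute the variant of (i) for the remaining $D_3 X_S$ factor. Specializing $x = x_b$, $s = t$, $\xi = x$ produces (ii); the minus sign in the exponent arises from reversing the orientation of the integral from $[t, T_S(x_b,t,x)]$ to $[T_S(x_b,t,x), t]$, and the denominator $g(x_b, \cdot)$ comes from evaluating $D_1 X_S$ at size $x_b$.

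I do not anticipate any conceptual obstacle: the argument is bookkeeping of partial derivatives combined with the variational equation and the inverse-function relation. The only care required is in keeping the three arguments of $X_S$ and $T_S$ clearly distinguished and in remembering that $T_S(x_b, t, x) \leq t$ whenever $x \geq x_b$ (by monotonicity of growth), which is why the exponent in (ii) is naturally written as a negative integral over $[T_S(x_b,t,x), t]$.
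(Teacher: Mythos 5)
Your proof is correct and follows exactly the route the paper indicates in its hint (the paper offers no proof beyond that hint): the variational equation for $D_3X_S$ solved as an exponential, then implicit differentiation of $X_S(T_S(x,s,\xi),s,\xi)=x$ in the last argument combined with the variant of \ref{2i}. You also correctly resolve two small notational slips: the hint's ``with respect to $x$'' must indeed mean differentiation in the third slot (the lemma's $x$, your $\xi$), since only that produces the $D_3X_S$ factor the hint asks you to replace, and the factor $g(x_b,t,x)$ in \ref{2ii} is to be read as $g(x_b,S(T_S(x_b,t,x)))$, exactly as you obtained.
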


\CCLsection{The Renewal Equation for the $p$-Level Birth Rate}
\label{s:RE}

In this section, we still consider $S$ as a given/known function of time. But now we want to define constructively the function $t \mapsto b(t)$ for $t \geq 0$ on the basis of the initial size-density $m_0$ and a new model ingredient, the per capita rate $\beta$ of giving birth, given the size of the mother and the prevailing food concentration. So the model specification should tell how $\beta$ depends on $x$ and $S$. (In case of an energy budget model, one first specifies the food uptake as a function of $x$ and $S$, and next how the ingested energy is partitioned between maintenance, growth and reproduction. So how exactly $\beta$ depends on $x$ and $S$ is derived from a submodel for ingestion and partitioning of energy.)

The equality
\begin{equation} \label{b-m}
    b(t) = \int_{x_b}^{\infty} \beta(\xi,S(t)) m(t,\xi) \dd\xi
\end{equation}
expresses that the $p$-level birth rate $b$ is simply the addition of all the per capita contributions. Upon substitution of \eqref{mtx} and the obvious transformations of the integration variable, we obtain from \eqref{b-m} the linear~RE
\begin{equation} \label{b-RE}
    b(t) = \int_0^t K(t,\tau) b(\tau) \dd\tau  +  F(t),
\end{equation}
where
\begin{equation*}
    K(t,\tau) := \beta(X_S(t,\tau,x_b),S(t)) \F_S(t,\tau,x_b)
\end{equation*}
and
\begin{equation*}
    F(t) := \int_{x_b}^\infty \beta(X_S(t,0,\xi),S(t)) \F_S(t,0,\xi) m_0(\xi) \dd\xi.
\end{equation*}

\begin{exercise}
Verify that \eqref{b-RE} expresses that the $p$-level birth rate $b$ is composed of contributions by individuals that were born at some time $\tau$ after time $0$ and survived up to the present time $t$, and contributions of individuals that were alive at time $0$, had size $\xi$ at that time, and survived till the present  time $t$. In other words, show that one does not need \eqref{mtx} to be able to formulate~\eqref{b-RE}.
\end{exercise}

\begin{exercise}
In analogy with \citet[Section 3]{DiekmannScarabelAge}, use generation expansion to derive from \eqref{b-RE} a representation of $b$ as an infinite series of well-defined  terms. Readers interested in the two-variable resolvent of the two-variable (in other words, non-autonomous) kernel $K$ are advised to consult the Volterra `bible' by \citet{Gripenberg1990}.
\end{exercise}

We conclude that one can, given the model ingredients $X_S$, $\F_S$ and $\beta$ and given the initial size-density~$m_0$, constructively define the $p$-level birth rate~$b$ for $t \geq 0$ by solving the linear RE~\eqref{b-RE} by means of generation expansion or, in other words, successive approximation. Once~$b$ is constructed, \eqref{mtx} provides an explicit formula for the size-density $m(t,\cdot)$ for $t > 0$.

\CCLsection{Closing the Feedback Loop: a Constructive Definition of $S$ as a Function of Time}
\label{sec:feedback}

So far we developed a systematic methodology: introduce the $i$-state, here~$x$, and the environmental condition, here~$S$; pretend that the environmental condition is a known function of time; use solutions of ODE to describe $i$-state development and survival probability under given environmental conditions; formulate the linear RE for the $p$-level birth rate~$b$ and solve it by generation expansion; and finally derive an explicit formula for (the density of) the $i$-state distribution in terms of the initial distribution and~$b$. 

Now we should address the issue of feedback: how is the environmental condition (partly) determined by interaction with the focal population?  For an attempt at building a general framework see \citet{Diekmann2001formulation},  
and for particular examples see \citet{Calsina1995model, Barril2022}, and \citet{Clement2024well}.
The general idea is to formulate a fixed point problem and to show that the contraction mapping principle can be applied, to obtain a unique solution (probably first on a small time interval, but by continuation on a maximal time interval). The art is to choose both the function spaces and the regularity assumptions concerning the model ingredients such that this approach works. 
(If the maximal time interval is finite, some kind of blow up has to happen; so if biological considerations exclude blow up, one should be able to formulate reasonable assumptions on the model ingredients that guarantee existence and uniqueness of the environmental condition in the time window $[0,\infty)$.)

For the \emph{Daphnia} model, feedback occurs by way of food consumption, and one requires that $S$ satisfies the ODE
\begin{equation} \label{S-m-ODE}
    \frac{\dd S}{\dd t}(t) =  f(S(t)) - \int_{x_b}^\infty \gamma(\xi,S(t)) \, m(t,\xi) \dd\xi
\end{equation}
with initial condition
\begin{equation} \label{S-ODE-IC}
    S(0) = S_0.
\end{equation}
Here, $f$ governs the food dynamics in the absence of consumers (e.g., $f(S) = D(S_{\text{in}} - S)$ in case of chemostat dynamics) and the new model ingredient~$\gamma$ specifies how the per capita uptake rate depends on the intrinsic variable `size' and the extrinsic variable `food concentration'. We refer to \citet{Barril2022} for a precise elaboration of the approach sketched above for this Daphnia model, in particular for the choice of function spaces and for precisely stated assumptions concerning the model ingredients $g$, $\mu$, $\beta$, $\gamma$ and~$f$.


\CCLsection{Steady States}\label{sec:steadystates}
Despite the fact that we deal with an infinite-dimensional dynamical system, finding steady states amounts to solving one equation in one unknown. The reason is that the dynamics of the consumer population is linear when the food concentration is prescribed. So to obtain a steady state, the constant food concentration $\overline{S}$ should be such that the consumer population neither grows nor declines. Recalling from \citet{DiekmannScarabelAge} the notion of \emph{basic reproduction number}, denoted by $R_0$ and describing the expected total number of offspring in an individual's lifetime, we may write this condition in the form
\begin{equation} \label{R0-threshold}
    R_0(\overline{S}) = 1.
\end{equation}

If `the more food, the better' holds, $R_0$ is a strictly monotone increasing function of $\overline{S}$, and \eqref{R0-threshold} has at most one biologically relevant solution. For chemostat dynamics, i.e., $f(S) = D(S_{\text{in}} -S)$, such a solution will exist if $R_ 0(S_{\text{in}}) > 1$, while the consumer is bound to go extinct if $R_0(S_{\text{in}}) < 1$.
In the following, we characterize the function $R_0(\overline{S})$ in terms of the given model parameters, and use \eqref{R0-threshold} to determine the steady state.

\bigskip
Under constant food concentration $\overline{S}$ we have, essentially, an age-struc\-tured problem (since all individuals have the same size $x_b$ at birth): the ODE \eqref{x-ODE} describing the age-size relation is autonomous, hence individuals of age $a$ will have the same size $\bar{x}(a)$, given by the solution of 
\begin{equation*}
    \begin{cases}
        \displaystyle{\frac{\dd \bar{x}}{\dd \tau} = g(\bar{x},\overline{S}),} \\[8pt]
        \bar{x}(0)=x_b, 
    \end{cases}
\end{equation*}
no matter when they were born. 
Let $\overline{n}(a)$ denote the population age-density, and $\overline{m}(x)$ the population size-density, at steady state. These densities are related via 
\begin{equation} \label{nbar-mbar}
    \int_0^a \overline{n}(\alpha)\dd\alpha = \int_{x_b}^{\bar{x}(a)} \overline{m}(\xi)\dd\xi \quad \Rightarrow \quad \overline{n}(a) = \overline{m}(\bar{x}(a)) g(\bar{x}(a),\overline{S}).
\end{equation}
On the other hand, given a constant population birth rate~$\overline{b}$ and the per capita death rate $\mu(\bar{x}(a),\overline{S})$, the age-density is given by  
\begin{equation} \label{nbar-bbar}
    \overline{n}(a) = \overline{b} \, \ee^{-\int_0^a \mu(\bar{x}(\alpha),\overline{S}) \dd\alpha} . 
\end{equation}
Combining \eqref{nbar-mbar} and \eqref{nbar-bbar}, we conclude that $\overline{m}$ and $\overline{b}$ are related via
\begin{equation} \label{mbar-bbar}
    \overline{m}(\bar{x}(a)) = \frac{\overline{b}}{g(\bar{x}(a),\overline{S})} \,\ee^{-\int_0^a \mu(\bar{x}(\alpha),\overline{S}) \dd\alpha}. 
\end{equation}
Finally, from the interpretation of $R_0$ (cf.~equation (24) in \citet{DiekmannScarabelAge}), 
we can characterize $R_0(\overline{S})$ as 
\begin{align}
    R_0(\overline{S}) &= \int_0^\infty \beta(\bar{x}(a),\overline{S}) \, \ee^{-\int_0^a \mu(\bar{x}(\alpha),\overline{S}) \dd\alpha} \dd a \notag \\
        &= \int_{x_b}^\infty \frac{\beta(y,\overline{S})}{g(y,\overline{S})} \, \ee^{-\int_{x_b}^y \frac{\mu(\eta,\overline{S})}{g(\eta,\overline{S})} \dd\eta} \dd y. \label{R0}
\end{align}
The value of~$\overline{S}$ is then determined by the condition~\eqref{R0-threshold}. 

To keep the food concentration at~$\overline{S}$, the production of food per unit of time, as described by $f(\overline{S})$, needs to be balanced exactly by the consumption of food per unit of time. Using the size-age relation \eqref{mbar-bbar} in equation \eqref{S-m-ODE}, we find
\begin{equation*}
    f(\overline{S}) = \int_{x_b}^\infty \gamma(\xi, \overline{S}) \,\overline{m}(\xi) \dd\xi = \overline{b} \int_{x_b}^\infty \frac{\gamma(\xi,\overline{S})}{g(\xi,\overline{S})} \,\ee^{-\int_{x_b}^\xi \frac{\mu(\eta,\overline{S})}{g(\eta,\overline{S})} \dd\eta} \dd \xi,
\end{equation*}
with the food concentration at the constant level $\overline{S}$ and the consumer birth rate equal to a, as yet unknown, constant $\overline{b}$.
Linearity in $\overline{b}$ makes that we can use this condition to derive the explicit expression
\begin{equation*}
    \overline{b} = \frac{f(\overline{S})}{\int_{x_b}^\infty \frac{\gamma(y,\overline{S})}{g(y,\overline{S})} \,\ee^{ - \int_{x_b}^y \frac{\mu(\eta,\overline{S})}{g(\eta,\overline{S})} \dd\eta } \dd y} . 
\end{equation*}

We conclude that, compared to unstructured consumer--resource models, finding/characterizing steady states is hardly more difficult when the consumer population is structured by size (and size at birth is fixed). As we shall see shortly, this is anything but true when it comes to stability.





\CCLsection{The Dynamical Systems Perspective} \label{sec:dynamicalsystem}
In Section \ref{sec:feedback} we described how $m(t,\cdot)$ and $S(t)$ can be constructed from the model ingredients and the initial data $m_0$ and $S_0$. The map $(m_0,S_0) \mapsto (m(t,\cdot),S(t))$ defines a dynamical system on a state space that is the Cartesian product of (the positive cone in) a space of functions of the size variable~$x$ and $\mathbb{R}_+$.
In the preceding section we found that it is rather easy to characterize the steady states of this nonlinear dynamical system. What can we say about the \emph{stability} of these steady states?

The designated way to establish the stability character of a steady state proceeds in three or four steps:
\begin{enumerate}
    \item linearize the equations around the steady state;
    \item for the linearized equations, characterize (in)stability in terms of spectral properties, in particular eigenvalues;
    \item if possible, show that the eigenvalues coincide exactly with the roots of a characteristic equation (for delay equations this is possible);
    \item analyze the position in the complex plane of the roots of the characteristic equation (and, in a bifurcation analysis, how positions change when parameters are varied).
\end{enumerate}

When the first step is taken formally, one implicitly presupposes that the solution of the nonlinear problem depends on the initial condition in a differentiable manner. Now recall the upper expression in \eqref{mtx}. The argument of $m_0$ depends on the values of $S$ in the interval $[0,t]$, so in particular on~$S_0$. No matter how smoothly this argument depends on~$S_0$, some regularity of~$m_0$ is needed in order to obtain differentiability of~$m$ with respect to~$S_0$ (for instance, if we view $m(t,\cdot)$ as an element of~$L^1$, we would need absolute continuity of~$m_0$; see \citet{Barril2022}).
This is a manifestation of the characteristic difficulty of state-dependent delay problems: when a function is translated with variable speed, we need some smoothness of that function when we want to differentiate with respect to a variable that influences the speed. (For an ingenious, but technically demanding, way to overcome this difficulty see \citet{Hartung2006functional}.) 

Note, incidentally, that when we consider $S$ as a function of time satisfying equation \eqref{S-m-ODE}, the lack of differentiability with respect to the initial condition is reflected in the fact that the PDE~\eqref{m-PDE} is quasi-linear, meaning that the term with the highest derivative involves nonlinearity.

It seems we encounter a major stumbling block. Can we get around it? An affirmative answer can be found by resorting to a `delay' dynamical systems perspective as follows. Rather than prescribing the initial size density~$m_0$, we prescribe the history of the $p$-level birth rate~$b$. This forces us to also describe the history of the resource concentration~$S$, since we need to be able to calculate the current (i.e., at time zero) size of the individuals that were born some time ago. So we need to specify 
\begin{equation} \label{IC-bS}
\begin{cases}
    b(\theta) = \phi(\theta) \\
    S(\theta) = \psi(\theta)
\end{cases}
\theta \leq 0,
\end{equation}
in order to get started. The governing equations
\begin{equation} \label{DE-bS}
    \begin{cases}
        \displaystyle{b(t) = \int_{-\infty}^t \beta(X_S(t,\tau,x_b),S(t)) \F_S(t,\tau,x_b) b(\tau) \dd \tau} \\[5pt]
        \displaystyle{\frac{\dd}{\dd t}S(t) = f(S(t)) - \int_{-\infty}^t \gamma(X_S(t,\tau,x_b),S(t)) \F_S(t,\tau,x_b) b(\tau) \dd \tau} \\
    \end{cases}
\end{equation}
are a coupled system of a RE for~$b$ and a DDE for~$S$. Given our earlier work, there is no need to discuss the construction of solutions of \eqref{IC-bS}--\eqref{DE-bS}, since we can, given $\phi$ and $\psi$, define $m_0$ by the lower expression in \eqref{mtx}, with $t=0$, $\phi$ substituted for $b$ and $\psi$ substituted for $S$, and define $S_0$ by $\psi(0)$ in order to find back the initial conditions that we considered in the earlier sections.

The map $(\phi,\psi) \mapsto (b_t,S_t)$  defines a dynamical system on the Cartesian product of suitably defined (infinite delay) history spaces for, respectively,~$b$ and $S$. We refer to \cite{Barril2022} for a rather technical proof that, given somewhat restrictive assumptions concerning the model ingredients, the right-hand side of \eqref{DE-bS} can be written as a $C^1$ map (from the state space to $\mathbb{R}^2$) applied to $(b_t,S_t)$. As a consequence, the solution operators are differentiable in this setting and the validity of the \emph{principle of linearized stability} follows from~\cite{Diekmann2012Blending}. 

Of course it is rather far-fetched to assume that $\phi$ and $\psi$ in \eqref{IC-bS} are known functions at the start of an experiment. But if one is monitoring the dynamics of the consumer population and its resource already for some time, much of the history is known. So in the course of time the reasonableness of the bookkeeping scheme in terms of delay equations increases. In particular, qualitative assertions about the asymptotic large time behavior do provide relevant information.

As we noticed above, it is simple to map $(\phi,\psi)$ to $(m_0,S_0)$. But this map is many-to-one and consequently not invertible. \cite{Barril2022} define a pseudo-inverse with $\psi$ identically equal to $S_0$ and use it to transfer (in)stability assertions from the delay setting to the PDE setting. Thus the principle of linearized stability for the PDE formulation is verified in a roundabout way.

The formula \eqref{mtx} consists of two parts. The non-differentiability derives from the upper part. Under reasonable conditions on the death rate, this part decays exponentially in time, no matter how the population as a whole develops. This observation suggests an alternative approach for developing stability and bifurcation theory: consider a setting where the solution operators are the sum of two operators, one obeying uniform exponential estimates and the other depending on the initial condition in a differentiable manner. Perhaps this is a promising approach, perhaps it is a cul-de-sac.

The delay equation formulation \eqref{DE-bS} involves at its right-hand side a map that sends $(b_t,S_t)$ to $\mathbb{R}^2$. Autonomous delay equations go hand in hand with characteristic equations exactly because the rule for extension has finite-dimensional range. For the \emph{Daphnia} model, the characteristic equation is derived and analyzed by \citet{Diekmann2010, Diekmann2017erratum}.
In Sections 5 and 6 of~\cite{Diekmann2010}, various (modest) ecological insights are derived by delineating the stability boundary of the unique steady state in a two-dimensional parameter space. At the Hopf part of the stability boundary one can, with a bit of effort, determine whether there exists a stable periodic solution outside of (but near to) the stability region or, alternatively, an unstable periodic solution inside (but near). 
To follow such periodic solutions for parameter values that move away from the stability boundary, one needs numerical continuation and bifurcation tools.
Therefore we now discuss pseudospectral approximation for size-structured models.


\CCLsection{Pseudospectral Approximation for Numerical Bifurcation Analysis}
\label{sec:pseudospectral}

A convenient approach for the numerical bifurcation analysis of size-struc\-tured models is to reduce them to a system of ODE and use the library of tools widely available for ODE to perform the analysis. 
Among the many methods available to perform this reduction, we here focus on pseudospectral approximation. 

To obtain an approximating ODE, one could start from the formulation~\eqref{DE-bS} in combination with pseudospectral approximation of delay equations, as described by \citet{Breda2016Prospects, BMVbook} and \citet{DiekmannScarabelAge}. 
But, when the life cycle involves different stages, the parameters may have discontinuities. It depends on the past environmental condition, when exactly these occur. 
In the \emph{Daphnia} model \citep{Breda2015computing, Diekmann2010}, individuals are born with a fixed size $x_b$ and become mature (hence, fertility becomes strictly positive) when they reach a given size $x_A$. The maturation age $a_A=a_A(t)$ (the age of individuals becoming mature at time $t$) is defined via the threshold condition
\begin{equation*}
    X_S(a_A,t-a_A,x_b) = x_A,
\end{equation*}
or, equivalently, by $a_A = t-T_S(x_b,t,x_A)$. 
When performing numerical bifurcation analyses, such an implicit condition has to be solved at every continuation step to determine $a_A$, leading to large computation times. 
We refer to \citet{Ando2020} for an ad hoc discretization approach that overcomes this problem by including the variables defining the threshold conditions among the continuation variables, giving an efficient numerical scheme. 

Another challenge of the delay formulation \eqref{DE-bS} is that a priori upper bounds for the support of the survival probabilities might not exist, for instance when the death rate~$\mu$ depends on the size of individuals, which is itself determined via environmental feedback. 
In this case, to apply the approximation described in \citep{DiekmannScarabelAge} one should either truncate the delay (i.e., the support of~$\F$) or resort to a variation of the pseudospectral approximation that accounts for unbounded delay~\citep{Gyllenberg2018, Scarabel2024Infinite}. 
We will discuss this variation in Appendix~\ref{sec:infinite}.

An alternative approach that avoids the state-dependent discontinuities and improves computational efficiency for size-structured models was suggested by \citet{Scarabel2021Vietnam}. 
The idea is to apply the collocation method to the PDE \eqref{M-PDE}, by approximating the state $M(t,\cdot)$, defined on a bounded size interval  $[x_b,x_{\text{max}}]$, with a polynomial of a given degree~$N\in\mathbb{N}_+$. 
In this way, the discontinuities of the model parameters occur at fixed sizes (e.g.,~$x_A$ in the \emph{Daphnia} example) that are assumed to be known.
We first summarize the method, then illustrate it by applying it to a simplified \emph{Daphnia} model. 

\bigskip
Consider a set of $N$ nodes $\{x_1,\dots,x_N\} \subset (x_b,x_{\text{max}}]$, such that
\begin{equation*}
    x_b < x_1 < \cdots < x_N \leq x_{\text{max}},
\end{equation*}
and $N$ corresponding variables $y_j(t)$, $j=1,\dots,N$, for $t\geq 0$. Let $p$ be the $N$-degree polynomial that takes value zero in $x=x_b$ and interpolates the values $y_j$ at the nodes. Using the Lagrange representation, we can write
\begin{equation}\label{pN}
    p(x) = \sum_{j=1}^N y_j \ell_j(x), 
\end{equation}
where the Lagrange polynomials $\ell_j$ are defined by 
\begin{equation*}
    \ell_j(x) = \prod_{\substack{i=0\\ i\neq j}}^N \frac{x-x_i}{x_j-x_i}, \quad j=0,\dots,N,
\end{equation*}
with $x_0=x_b$ by definition. Both~$p$ and $M(t,\cdot)$ take value~$0$ in $x=x_b$. Note also that~$p$ depends on time~$t$ via the coefficients~$y_j$, even though this is not expressed in the notation. 

The pseudospectral approach can be rigorously defined by means of a restriction operator that maps a continuous function to the vector of its values in the collocation points, and a prolongation operator that maps a vector to the interpolating polynomial. In practice, this is equivalent to requiring that the polynomial $p$ satisfies \eqref{M-PDE} for $x \in \{x_1,\dots,x_N\}$. By linearity, we have
\begin{equation*}
    \frac{\partial p}{\partial t} = \sum_{j=1}^N \frac{\dd y_j}{\dd t} \ell_j, \quad \text{and} \quad \frac{\partial p}{\partial x} = \sum_{j=1}^N y_j \frac{ \dd\ell_j}{\dd x}.
\end{equation*}
Hence, using the property $\ell_{j}(x_i)=\delta_{ij}$, where $\delta_{ij}$ is Kronecker's delta, and defining $d_{ij}:= \ell_j'(x_i)$, one obtains from \eqref{M-PDE} the $N$ ODEs
\begin{equation}\label{ODE-yj}
    \frac{\dd y_i}{\dd t} + g(x_i,S) \sum_{j=1}^N d_{ij}y_j = \tilde{b} - \sum_{j=1}^N y_j \int_{x_b}^{x_i} \mu(\xi,S) \ell_j'(\xi) \dd \xi, \quad i=1,\dots,N,
\end{equation}
where $\tilde{b}$ is obtained by replacing in \eqref{b-m} $m$~by the derivative of the approximation~$p$ of~$M$, so (using~\eqref{pN})
\begin{equation*}
\tilde{b} = \sum_{j=1}^N y_j \int_{x_b}^{x_{\text{max}}} \beta(\xi,S) \ell_j'(\xi) \dd \xi.  
\end{equation*}

One can write system~\eqref{ODE-yj} in matrix notation by defining the column vector $y=(y_1,\cdots,y_N)^T$ and matrices $D$, $G(S)$ and $\Sigma(S)$ in $\mathbb{R}^{N\times N}$ as follows 
\begin{align*}
    & D_{ij} = d_{ij} \\
    & (G(S))_{ij} = \begin{cases} g(x_i,S) & \text{if } i=j\\ 0 & \text{otherwise,}\end{cases} \\
    & (\Sigma(S))_{ij}=\int_{x_b}^{x_i} \mu(\xi,S) \ell_j'(\xi) \dd \xi. 
\end{align*}
Then, system \eqref{ODE-yj} can be rewritten as 
\begin{equation}\label{ODE}
    \frac{\dd y}{\dd t} = - G(S) D y - \Sigma(S) y + \tilde{b} \,\mathbf{1},
\end{equation}
where $\mathbf{1}$ is the $N$-dimensional vector with all entries equal to one. 

To close the feedback loop with the environmental variable $S$, one can replace in \eqref{S-m-ODE} the function $m$ by the derivative of $p$, leading to 
\begin{equation}\label{tildeS}
    \frac{\dd \tilde{S}}{\dd t} = f(\tilde{S}) - \sum_{i=1}^N y_i \int_{x_b}^{x_{\text{max}}} \gamma(\xi,\tilde{S}) \ell_i'(\xi)\dd \xi, 
\end{equation}
where we used the notation $\tilde{S}$ to stress that the solution to \eqref{tildeS} is an approximation of $S$. 
If we now replace $S$ in \eqref{ODE} by $\tilde{S}$, system \eqref{ODE}--\eqref{tildeS} can be used to approximate the solutions $(M(t,\cdot),S(t))$ of the full nonlinear problem \eqref{M-PDE} and \eqref{S-m-ODE}. 

A natural choice of collocation nodes to guarantee good convergence of the interpolation scheme is a family of Chebyshev nodes (zeros or extrema of the family of Chebyshev orthogonal polynomials) scaled to the appropriate interval, see for instance \citet[Section 1.5]{Gautschi2004book}.

When parameters have discontinuities (e.g., at the maturation size $x_A$), a piecewise collocation approach can be more accurate, as it also improves the accuracy of quadrature formulas used to numerically compute the integrals. We refer to \citet[Section 2.2]{Scarabel2021Vietnam} for further details.

\bigskip
Differently from the case of RE, whose steady states are constant functions and are interpolated exactly by polynomials of any degree, a steady state $\overline{M}$ of \eqref{M-PDE} can be a function defined on $[x_b,x_{\text{max}}]$, and convergence of the polynomial approximation must be proved. In particular, as shown by \citet[Theorem 1]{Scarabel2021Vietnam}, the order of convergence with respect to the degree $N$ of the collocation polynomial depends on the smoothness of $\overline{M}(x)$ and, thus, on the smoothness of the model parameters. 

To prove theoretically that the stability of the steady states is well approximated by the discrete problems, one should look at the linearized problems, and show that the spectra of the approximating linearized operators converge to the spectrum of the linearized PDE. This is still an open problem. 
It is reasonable to expect that the approximation errors of the eigenvalues exhibit the same order of convergence (with respect to $N$) as the approximation error of the steady state.  
In the absence of a theoretical proof, preliminary numerical investigations, obtained by comparing the results from the approximation of the PDE and the delay equation formulation, have shown that the numerical bifurcation analysis of \eqref{ODE}--\eqref{tildeS} can indeed reveal the properties of \eqref{M-PDE} and \eqref{S-m-ODE}. 
We now report on the results of numerical tests performed with the package MatCont for MATLAB \citep{MatCont2008}. 

\bigskip
Consider the model for \emph{Daphnia} described by  \eqref{M-PDE} and \eqref{S-m-ODE}, with a logistic consumer-free dynamics 
\begin{equation}
    f(S) = a_1 S \left(1-\frac{S}{K}\right),
\end{equation}
and individual parameters as in Table \ref{t:functions_Daphnia}, taken from~\cite{Kooijman1984}. 
The system always admits a consumer-free steady state $(0,K)$, and a nontrivial (positive) steady state $(\overline{b},\overline{S})$ when $R_0>1$, with $R_0$ defined by~\eqref{R0}, with $\overline{S}=K$. 

To perform the bifurcation analysis, we construct an ODE approximation through the pseudospectral approach described above.
Individuals become mature when they reach size $x_A$, after which they start reproducing. To account for the discontinuity in the fertility rate~$\beta$ we used a piecewise approach, with a collocation polynomial of degree $N=10$ in each interval $[x_b,x_A]$ and $[x_A,x_{\text{max}}]$. The resulting system has $20$ equations approximating $M(t,x)$ and the additional equation~\eqref{tildeS}. 
Using MatCont, we here analyze the impact of the maturation size $x_A$, the maximal size $x_{\text{max}}$, and the resource carrying capacity $K$, on the dynamics of the system. The other parameters are defined in Table~\ref{t:parameters}.

\begin{table}[ht]
\renewcommand*{\arraystretch}{1.5}
\caption{Individual rates of the \emph{Daphnia} model as suggested by \citet{deRoos1990} and \citet{Kooijman1984}.}
\label{t:functions_Daphnia}
\centering
{\small
\begin{tabular}{llc}
\toprule
{Description} & {Function} \\
\midrule
functional response & $f_0(S) = \frac{\xi S}{1+\xi S}$ \\
growth rate &
$
g(x,S) = \max\{ 0, \gamma_g \left( x_\text{m} f_0(S) - x \right)\}
$ 
\\
mortality rate &
$\mu(x,S) = \mu$
\\
consumption rate &
$\gamma(x,S) = \nu_s f_0(S) x^2$ 
\\
reproduction rate & 
$ 
\beta(x,S) = 
\begin{cases}
0 & \text{if } x_\text{b} \leq x \leq x_\text{A},\\
r_m f_0(S) x^2 & \text{if } x_\text{A}<x<x_\text{max}
\end{cases}
$ 
\\
\bottomrule
\end{tabular}
}
\end{table}

\begin{table}[ht]
\caption{Parameter values of the \emph{Daphnia} model, taken from \citet{deRoos1990} and \citet{Kooijman1984}.}
\label{t:parameters}
\centering
{\small
\begin{tabular}{lcl}
\toprule
{Description} & {Symbol} & {Value} \\
\midrule
length at birth & $x_\text{b}$ & $0.8\ \text{mm}$ \\
length at maturation & $x_\text{A}$ & varying (mm) \\ 
maximum attainable length & $x_\text{max}$ & varying (mm) \\ 
time constant of growth & $\gamma_g$ & $0.15\ \text{d}^{-1}$  \\
shape parameter of functional response & $\xi$ & $7.0\ \text{ml}\cdot\text{cell}^{-1}$ \\
max feeding rate per unit area & $\nu_s$ & $1.8\times\,10^{-6}\,\text{mm}^{-2}\,\text{cell}\,\text{d}^{-1}$ \\
max reproduction rate per unit area & $r_m$ & $0.1\ \text{mm}^{-2}\cdot\text{cell}\cdot\text{d}^{-1}$ \\
max algal growth rate rate & $a_1$ & $0.5\ \text{d}^{-1}$ \\
mortality rate parameter & $\mu$ & $0.2\ \text{d}^{-1}$ \\ 
carrying capacity of the environment & $K$ & varying (cell$\cdot$ml$^{-1}$) \\
\bottomrule
\end{tabular}
}
\end{table}

\begin{figure}[ht]
    \centering
    \input{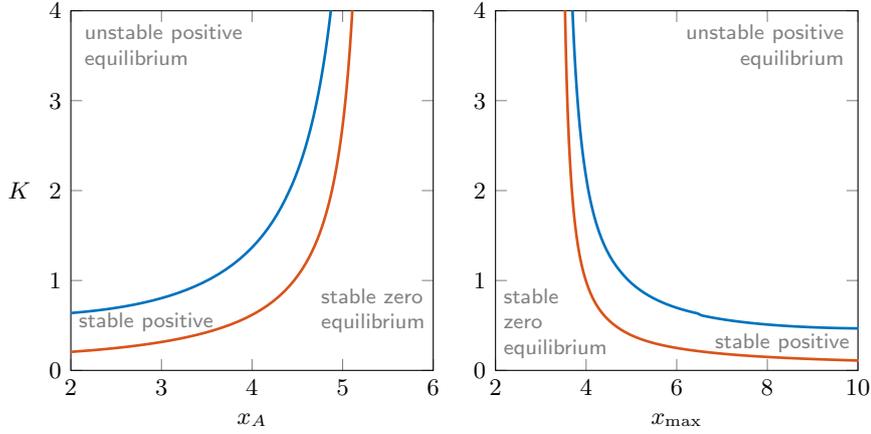}   
    \caption{Left: Stability diagram in the plane $(x_A,K)$, for $x_{\text{max}}=6$; right: Stability diagram in the plane $(x_{\text{max}},K)$, for $x_A=2.5$. }
    \label{fig:daphnia-stability}
\end{figure}

\begin{figure}[ht]
    \centering
    \input{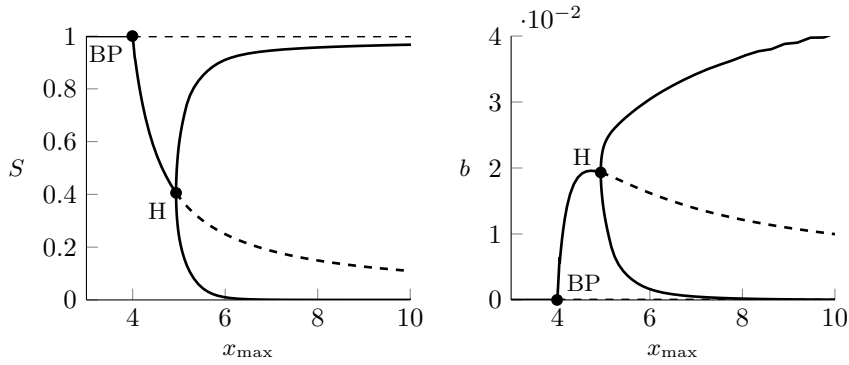}   
    \caption{Bifurcation diagram varying $x_{\text{max}}$, for $x_A=2.5$ and $K=1$, showing the equilibrium branches and the max/min value of the stable periodic orbits emerging from Hopf. }
    \label{fig:daphnia-bifurcation}
\end{figure}


\begin{figure}[ht]
    \centering
    \input{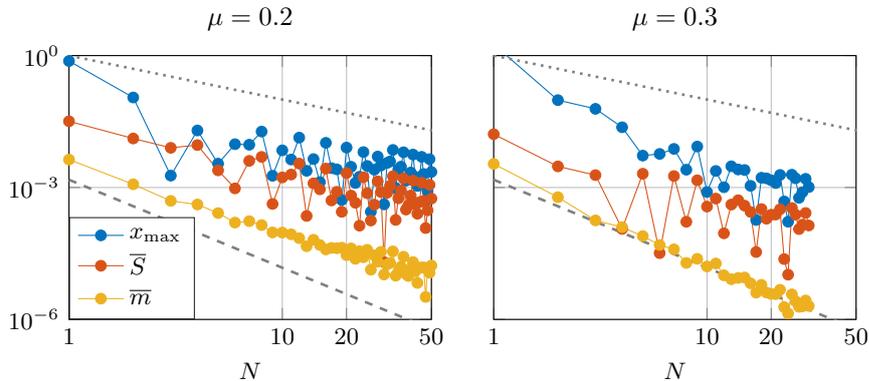}   
    \caption{Log-log plot of the numerical errors in the approximated values of $x_{\text{max}}$, $\overline{S}$ and $\overline{m}$ at the detected Hopf point, for $x_A=2.5$, $K=1$, and $\mu=0.2$ (left) and $\mu=0.3$ (right). The gray lines represent the order of convergence $N^{-1}$ (dotted) and $N^{-2}$ (dashed). }
    \label{fig:daphnia-error}
\end{figure}

Figure~\ref{fig:daphnia-stability} shows the existence and stability boundaries of the nontrivial steady state of \eqref{ODE}--\eqref{tildeS}. The nontrivial steady state destabilizes through a Hopf bifurcation when $K$ or $x_{\text{max}}$ increase, or $x_A$ decreases. 
To illustrate the existence of a stable periodic solution outside of the stability region, we performed a one-parameter continuation with respect to $x_{\text{max}}$, fixing $x_A=2.5$ and $K=1$. 
Figure~\ref{fig:daphnia-bifurcation} shows the Hopf bifurcation on the positive equilibrium branch, from which a branch of stable periodic orbits is born. 

To analyze the convergence of the bifurcations when increasing $N$, we considered the approximation of the Hopf point detected during the continuation in $x_{\text{max}}$ in Figure~\ref{fig:daphnia-bifurcation}. The error, computed with respect to the value obtained with $N=51$, is shown in the left panel of Figure~\ref{fig:daphnia-error}. The experimental order of convergence is $O(N^{-k})$, with $k\in (1,2)$. The convergence order is related to the smoothness of $\overline{m}$, as explained by \cite{Scarabel2021Vietnam}: when $\mu=0.2$, the function $\overline{m}'$ is discontinuous at $x=x_A$. In contrast, for $\mu=0.3$ the equilibrium $\overline{m}$ is continuously differentiable, with second derivative of bounded variation: the experimental order of convergence is approximately two, as shown in the right panel of Figure~\ref{fig:daphnia-error}.

\CCLsection{Concluding Remarks}
\citet{DiekmannScarabelAge} considered the simplest form of $i$-state dynamics. In the present chapter we went one level up and focused on the situation in which the rate of $i$-state change depends on both the $i$-state itself and on a one-dimensional quantity, characterizing the external world. By assuming that all individuals are born with the same $i$-state, we retained some of the simplicity of age-dependent $p$-dynamics. In particular, we found once again that there are two ways of organizing the bookkeeping of $p$-level dynamics, viz., either by keeping track of the $i$-state distribution or by updating the history of the birth rate. But in the latter case one needs, in addition, the full history of the environmental variable in order to determine the current state of individuals born before the point in time at which our model description starts to apply.

It is easy to characterize steady states in terms of the roots of a scalar equation. It is a bit more involved to derive, essentially by formal linearization, a characteristic equation, cf.~\citet{Diekmann2010, Diekmann2017erratum}. But it is downright difficult to prove the principle of linearized stability, in particular to show that the steady state is asymptotically stable if all roots of the characteristic equation have negative real part and unstable if at least one root has positive real part. The key obstruction is formed by the fact that the PDE solution operators are not differentiable, since they involve variable translation of the initial condition. And, in general, the initial condition is not smooth enough to make translation a differentiable operation (so the technical difficulty is identical to the one which makes the study of state-dependent delay equations hard). \citet{Barril2022} manage to prove the principle of linearized stability in the delay equation setting under somewhat restrictive assumptions on the model ingredients. As we described at the end of Section~\ref{sec:dynamicalsystem}, it is an open problem to prove the principle of linearized stability under less restrictive assumptions by exploiting that the lack of differentiability resides in a contribution to the solution operator that decays exponentially in time.

Why are the models considered in this chapter interesting from a biological point of view? Because they capture interaction via variable maturation delay! We refer to \citet{DeRoosPersson2013} for a convincing substantiation of this claim. In addition, we like to mention that the explicit introduction of environmental variables forms the basis for a lego-methodology of formulating complex models by coupling more elementary building blocks.

Even though the `Daphnia' models are relatively simple from a biological point of view, they are quite challenging when it comes to performing a numerical bifurcation analysis in order to investigate how dynamics depends on parameters. The main take home message of this chapter is: in this respect, the recent developments concerning pseudospectral approximation have led to major improvements!

\appendix 

\CCLsection{Pseudospectral Approximation of Equations with Infinite Delay }
\label{sec:infinite}
The aim of this appendix is to tempt readers, and in particular numerical analysts, to give the countless technical difficulties involved in approximating solutions of infinite-delay equations some thought. We hope that our description of recent work will catalyze new developments. 

Motivated by the pseudospectral discretization of RE described by \citet{DiekmannScarabelAge}, one might be tempted to use the delayed system~\eqref{DE-bS}, instead of the PDE~\eqref{m-PDE} with~\eqref{S-m-ODE}, for numerical bifurcation analyses. 

The dependence of the parameters (e.g., the death rate) on environmental conditions, which themselves are influenced by the population, means that it is often impossible to bound the delay \emph{a priori}. Hence, the pseudospectral approximation of RE with finite delay described by \citet{DiekmannScarabelAge} cannot be applied. 
As we will briefly explain in this section, working with the unbounded history interval $\mathbb{R}_-:=(-\infty,0]$ involves both theoretical and numerical complications.

From a theoretical point of view, $L^1$ and $C$ are not appropriate state spaces for the dynamical system when the delay is unbounded. Intuitively, this becomes clear when observing that constant functions do not belong to $L^1(\mathbb{R}_-)$. Furthermore, in order to study the stability via the roots of a characteristic equation, one should ensure that the Laplace transform is well defined to the right of a vertical line in $\mathbb{C}$. 
As shown by \citet{Diekmann2012Blending}, the appropriate spaces to develop the dynamical system theory and derive the principle of linearized stability are exponentially weighted function spaces. For RE, the appropriate Banach space is $L^1_\rho(\mathbb{R}_-, \mathbb{R}^m)$, with $\rho>0$, containing all (equivalence classes of) $\mathbb{R}^m$-valued measurable functions $\varphi$ defined over $\mathbb{R}_-$ such that $\theta \mapsto \ee^{\rho\theta}\varphi(\theta)$ is integrable in $\mathbb{R}_-$, with norm
\begin{equation*}
    \|\varphi\|_{1,\rho} = \int_{-\infty}^0 \ee^{\rho\theta} |\varphi(\theta)| \dd\theta. 
\end{equation*}
For DDE, the appropriate Banach space is $C_{0,\rho}(\mathbb{R}_-, \mathbb{R}^m)$, formed by all $\mathbb{R}^m$-valued functions $\psi$ defined on $\mathbb{R}_-$ such that $\theta\mapsto \ee^{\rho\theta}\psi(\theta)$ is continuous and vanishes at minus infinity, with norm 
\begin{equation*}
    \|\psi\|_{\infty,\rho} = \sup_{\theta \in \mathbb{R_-}} \ee^{\rho\theta} |\psi(\theta)|. 
\end{equation*}
The parameter $\rho$ can be chosen inside an interval $0<\rho<\overline{\rho}$, where the maximum value $\overline{\rho}$ is determined by the equation. The (in)stability of a steady state is then determined by the maximum real part of the roots of the characteristic equation that lie in the right-half plane $\{\Re\lambda>-\rho \} \subset \mathbb{C}$. 

From a numerical point of view, one needs to approximate functions defined on the semi-unbounded interval $\mathbb{R}_-$. 
Several methods are available, including domain truncation, suitable coordinate transformation from unbounded to bounded intervals, or spectral methods based on orthogonal polynomials on the unbounded domain, see for instance the book by \citet[Chapter 7]{Shen2011Book}. We here focus on the pseudospectral approach suggested by \cite{Gyllenberg2018} and \cite{Scarabel2024Infinite}, which exploits the structure of the state spaces by approximating the exponentially weighted functions with exponentially weighted polynomials in~$\mathbb{R}_-$. 

\bigskip
To keep the exposition concise and somewhat complementary to \citet{DiekmannScarabelAge}, we here consider a scalar DDE
\begin{equation}\label{DDE}
    \frac{\dd y(t)}{\dd t} = F(y_t), \quad t\geq 0,
\end{equation}
and only hint at the differences that occur when dealing with RE. Here $F$ is a differentiable function that maps $C_{0,\rho}:=C_{0,\rho}(\mathbb{R}_-, \mathbb{R})$ to $\mathbb{R}$. 
The dynamical system corresponding to~\eqref{DDE} maps an initial function $\psi \in C_{0,\rho}$ to the history $y_t \in C_{0,\rho}$ at time~$t$, defined by $y_t(\theta)=y(t+\theta)$ for $\theta \in \mathbb{R}_-$. The generator of the dynamical system is given by differentiation, captured by the relation 
\begin{equation} \label{PDE-yt}
    \frac{\partial y_t}{\partial t}=\frac{\partial y_t}{\partial\theta},
\end{equation}
plus a perturbation that describes the extension rule at $\theta=0$, given by~\eqref{DDE}. 

Let $w$ be the exponential weight function, i.e., $w(\theta)=\ee^{\rho\theta}$. 
An element~$\psi$ of $C_{0,\rho}$ can grow beyond bound at~$-\infty$, as does its interpolating polynomial~$p$. For this reason, from a numerical point of view it is convenient to work with the weighted counterparts~$w\psi$ and $wp$, which vanish at~$-\infty$, improving numerical stability. 
We therefore need to understand how the shift-and-extend rules for the states~$y_t$ translate into rules for the weighted states~$wy_t$.
First, observe that
\begin{equation} \label{diff-w}
    \frac{\dd}{\dd \theta} (w\psi) = w \,\left( \frac{\dd\psi}{\dd\theta} + \rho \, \psi \right)
    \quad \Rightarrow \quad
    w \frac{\dd \psi}{\dd\theta} = \frac{\dd(w\psi)}{\dd\theta} - \rho\,(w\psi).
\end{equation}
Since $w(0)=1$, the rule for extension in $\theta=0$ is still described by~\eqref{DDE}. 
In other words, we can formally write an abstract differential equation for the pair $(y(t),u(t))$ given by $y(t)=y_t(0)=w(0)y_t(0)$ together with the weighted history $u(t)=wy_t$ as follows 
\begin{align}
    \frac{\dd y(t)}{\dd t} &= F(u(t)/w), \label{ADE-inf-0} \\
    \frac{\dd u(t)}{\dd t} &= \left( \frac{\dd}{\dd\theta} - \rho \right) u(t). \label{ADE-inf}
\end{align}

\bigskip
Given a positive integer $N$, the pseudospectral approximation considers a set of nodes $\{\theta_0,\theta_1,\dots,\theta_N\} \subset \mathbb{R}_-$, such that 
\begin{equation*}
    \theta_N<\theta_{N-1}< \cdots<\theta_1<\theta_0=0,
\end{equation*}
and~$N+1$ corresponding variables $y_0(t)$, \dots, $y_N(t)$ which are used to mimic the dynamics of the weighted history $wy_t$ in each node.
Let $\hat{p}$ be the weighted interpolation polynomial 
\begin{equation}\label{hatp}
    \hat{p}(\theta) = w(\theta) \sum_{j=0}^N \frac{\ell_j(\theta) y_j}{w(\theta_j)}, \quad \theta \in \mathbb{R}_-,
\end{equation}
where the Lagrange polynomials $\ell_j$ are defined by 
\begin{equation*}
    \ell_j(\theta) = \prod_{\substack{i=0\\ i\neq j}}^N \frac{\theta-\theta_i}{\theta_j-\theta_i}. 
\end{equation*}
Note that $\hat{p}$ depends on time via the coefficients $y_j$, but we omit this in the notation. 

By collocation, we assume that $\hat{p}$ satisfies \eqref{ADE-inf-0} at $\theta=\theta_0$, and \eqref{ADE-inf} at $\theta=\theta_1,\dots,\theta_N$. 
To do this, we first note that, for $i=1,\dots,N$, 
\begin{equation*}
    \frac{\dd \hat{p}}{\dd t} \Big|_{\theta=\theta_i} = \frac{\dd}{\dd t} \Big( w(\theta) \sum_{j=0}^N \frac{\ell_j(\theta) y_j}{w(\theta_j)}\Big) \Big|_{\theta=\theta_i} = \sum_{j=0}^N \ell_j(\theta_i) \frac{\dd y_j}{\dd t} = \frac{\dd y_i}{\dd t}
\end{equation*}
and 
\begin{equation*}
    \frac{\dd \hat{p}}{\dd \theta} \Big|_{\theta=\theta_i} = \frac{\dd}{\dd \theta} \Big(w(\theta) \sum_{j=0}^N \frac{\ell_j(\theta) y_j}{w(\theta_j)}\Big ) \Big|_{\theta=\theta_i} = \sum_{j=0}^N \hat{d}_{ij} \,y_j, 
\end{equation*}
where we defined 
\begin{equation*}
    \hat{d}_{ij} := \frac{\dd}{\dd\theta} \left(\frac{w(\theta)\ell_i(\theta)}{w(\theta_i)}\right) \Big|_{\theta=\theta_i}.
\end{equation*}
We conclude that the pseudospectral approximation of \eqref{ADE-inf-0}--\eqref{ADE-inf} is the system of ODE
\begin{equation} \label{ODE-inf}
    \begin{array}{l}
    \displaystyle\frac{\dd y_0}{\dd t} = F(\hat{p}/w), \\[8pt]
    \displaystyle\frac{\dd y_i}{\dd t} = \sum_{j=0}^N \hat{d}_{ij}y_j - \rho \, y_i, \quad i=1,\dots,N,
    \end{array}
\end{equation}
where the first equation captures the rule for extension, and the remaining equations capture translation. The variable~$y_0(t)$ mimics the solution $y(t)$ of~\eqref{DDE}. 

For the RE $y(t)=F(y_t)$ the scheme differs slightly, as one should consider the primitive of the state, $u(t)=w\int_0^\cdot y_t $, similarly as described for finite delay by \citet{DiekmannScarabelAge}. In this case $y_0\equiv 0$, and the variables $y_i$, which approximate $u(t)$ in each node $\theta_i$, satisfy the system of ODE
\begin{equation*} 
    \frac{\dd y_i}{\dd t} = \sum_{j=1}^N \hat{d}_{ij} y_j - \rho \, y_i - w(\theta_i) F\left( \frac{\dd(\hat{p}/w)}{\dd\theta}\right) , \quad i=1,\dots,N,
\end{equation*}
where $\hat{p}$ is still given by \eqref{hatp}. Note that, from a computational point of view, we can exploit \eqref{diff-w} to rewrite 
\begin{equation*}
    \frac{\dd(\hat{p}/w)}{\dd\theta} = \frac{1}{w} \left(\frac{\dd\hat{p}}{\dd\theta} -\rho\,\hat{p} \right).
\end{equation*} 
For details we refer to \citet{Scarabel2024Infinite}, who introduce an abstract framework capturing both DDE and RE. 

\citet[Theorem 4.3]{Scarabel2024Infinite}  show that steady states of~\eqref{DDE} and \eqref{ODE-inf} are in one-to-one correspondence, and linearization at an equilibrium and pseudospectral discretization commute. Hence, to study whether the approximating system effectively captures the stability of steady states, one can focus on linear problems. The convergence of characteristic roots is proved if the collocation nodes are chosen as 
\begin{equation} \label{Laguerre-nodes}
    \theta_j = - \frac{x_j}{2\rho}, \quad j=1,\dots,N,
\end{equation}
where $\{x_j \colon j=1,\dots,N\}$ are either the zeros or extrema of the standard Laguerre polynomials in $[0,+\infty)$, orthogonal with respect to the weight~$\ee^{x/2}$. 
For these families of nodes, the authors prove that for each isolated characteristic root~$\lambda$ of the delayed linear equation such that $\Re\lambda>-\rho$, there exists a sequence $\{\lambda_N\}_N$ of characteristic roots of the approximating ODE that converges to $\lambda$ as $N \to \infty$. The order of convergence is exponential in~$N$, and depends on the multiplicity of the characteristic root, its modulus, and its real part \citep[Theorem~4.5 and~4.6]{Scarabel2024Infinite}. 
We note that, due to limited theoretical results on the error bounds of polynomial interpolation in the norm $\|\cdot\|_{\infty,\rho}$ (see for instance \citet{MastroianniMilovanovicBook}), the proof of convergence follows a different approach compared to the case of finite delay as elaborated in the book by \citet{BMVbook}. 

\bigskip
From a computational point of view, the weighted polynomial interpolant~$\hat{p}$ and the coefficients~$\hat{d}_{ij}$, which are the entries of the \emph{weighted differentiation matrix}, can be computed with efficient and stable algorithms, see for instance \citet{Weideman2000matlab}. 

The function~$F$ usually involves integrals on~$\mathbb{R}_-$. When the kernels are continuous, one can use suitable quadrature rules defined on the collocation nodes, which avoid the numerical evaluation of~$\hat{p}$ outside of the nodes. A typical choice is to use the Gauss--Radau quadrature on the scaled Laguerre extrema~\eqref{Laguerre-nodes}. 

The convergence order of the chosen quadrature rule might limit the accuracy of the approximation of the characteristic roots, hence it is convenient to exploit the flexibility in the choice of the parameter~$\rho$, which affects the collocation nodes via~\eqref{Laguerre-nodes}, to reduce the quadrature error. 
For a linear function $F$ of the form
\begin{equation*}
    F(\psi) = \int_{-\infty}^0 \ee^{\mu\theta} \psi(\theta) \dd\theta, 
\end{equation*}
a convenient choice is $\rho=\mu/2$. In this case, the Gauss--Radau quadrature formula
\begin{equation*}
    \int_{-\infty}^0 f(\theta) \dd\theta \approx \sum_{j=0}^N f(\theta_j) \omega_j, 
\end{equation*}
where $\omega_j$ are the quadrature weights, is exact on functions of the form $f(\theta) = \ee^{2\rho\theta} q(\theta)$, where $q$ is a polynomial of degree $2N$. 
Nonlinear functions $F$ are usually in the form of a finite-dimensional function applied to a linear functional, hence the choice of~$\rho$ can be guided by similar observations. 

A piecewise approach can be more convenient when the kernels are discontinuous, for instance when the life cycle has discrete stages. In this case, using different quadrature rules on the bounded and unbounded intervals can achieve higher accuracy. 


Numerical tests suggest that the approximation of infinite-delay equations requires in general higher-dimensional ODE systems compared to finite delay (and, for size-structured models, to the corresponding PDE) to reach a given accuracy. 
To offer a preliminary insight, we compared a 100-point continuation of the endemic equilibrium using MatCont with the pseudospectral approximation of the infinite-delay system~\eqref{DE-bS} (with the solution of the implicitly-defined maturation age at every step) and the PDE system \eqref{M-PDE} and \eqref{S-m-ODE} (piecewise approach), with $N=10$, so both systems are approximated with an ODE of dimension $2N+1=21$. The difference in computation time is more than $100$-fold, with $124$~seconds required by the infinite-delay approach and $0.7$~seconds by the PDE approach. A more in-depth performance comparison is currently in progress. 

To address the challenge of computational efficiency, \citet{ScarabelVermiglio2025} have recently performed numerical investigations using truncated Laguerre interpolation and quadrature rules, that in many cases allow to reduce the approximating system's dimension while ensuring a given accuracy. In particular, truncated rules seem to be very effective for integration kernels with limited smoothness, while less so for analytic kernels. 

Finally, the presence of an additional parameter to be tuned, $\rho$, that also affects the numerical approximation via the quadrature rules, makes this approximation method less user-friendly in practical applications. 
These challenges are some of the reasons why infinite-delay equations have not yet been integrated in the delay equation importer for MatCont, recently implemented for equations with finite delay by \citet{Liessi-matcont}.

To conclude, the approximation of infinite-delay equations involves several numerical challenges that remain open and, as of today, create barriers to a user-friendly as well as effective implementation and wider adoption of the methods (and consequently of the mathematical models).

\bibliographystyle{plainnat}
\bibliography{size-structure.bib}

\end{document}